\documentclass{article}
\usepackage{graphicx,amsmath,amssymb,mathrsfs,pxfonts} 

\newtheorem{theorem}{Theorem}[section]
\newtheorem{proposition}[theorem]{Proposition}
\newtheorem{corollary}[theorem]{Corollary}
\newtheorem{remark}[theorem]{Remark}
\title{Short-term barrier option price expansion}
\author{Masaaki Fukasawa\thanks{
The University of Osaka\\
\texttt{fukasawa@sigmath.es.osaka-u.ac.jp}\\
\textbf{Acknowledgements:} This work arose from a joint research project involving Mizuho Bank, Ltd., Mizuho-DL Financial Technology Co., Ltd., Mizuho Securities Co., Ltd., and The University of Osaka. The author is grateful to the participants for helpful comments and discussions. The author also thanks Masaya Mitsuhira, Kota Nakazawa, Ryosuke Ohira, and Tamaki Shimooka for their contributions to the numerical experiments.
}}
\date{}

\begin{document}

\maketitle

\begin{abstract}
We derive a short-maturity expansion for up-and-out put barrier option prices under continuous
stochastic volatility when the strike and the barrier approach the spot at the
diffusive scale.  Assuming joint weak convergence of the normalized terminal
return, the relative volatility fluctuation, and the running maximum, together
with uniform integrability, we show that the leading term is the
time-inhomogeneous Black--Scholes barrier price fitted to the forward variance
curve.  The first model-dependent correction is of order
$\theta^{H+1/2}$, where $\theta^H$, $H \in (0,1/2]$, is the order of
the relative volatility fluctuation,
and is represented explicitly through killed Brownian
transition densities.  For regular volatility models, where $H=1/2$, the
coefficient is determined by the short-maturity at-the-money implied-volatility skew.  For 
rough volatility models with $H<1/2$, the coefficient reduces to a one-dimensional
integral.  Numerical experiments show that the
correction materially improves the Black--Scholes approximation.
\end{abstract}

\noindent\textbf{Keywords:} barrier option; short-term asymptotics; rough volatility.

\section{Introduction}

Barrier options are arguably the most basic class of exotic derivatives.  They are among the simplest path-dependent claims, but already exhibit the main difficulty that distinguishes exotic from European option pricing: their value depends not only on the terminal distribution of the underlying asset but also on the behavior of the price path before maturity. Understanding barrier options is therefore a natural first step toward a short-maturity asymptotic theory for more general path-dependent derivatives.

Despite their basic role, rigorous asymptotic results for barrier option prices appear to be relatively scarce compared with the extensive literature on European options.  Closed-form barrier prices in the Black--Scholes model follow from the reflection principle, but explicit formulas are generally unavailable under local or stochastic volatility.  One important line of research uses singular perturbation methods.  Ilhan et al.~\cite{IJS} extended the fast mean-reverting stochastic-volatility approach of Fouque et al.~\cite{FPS} to barrier and other exotic options.  In the presence of an absorbing boundary, the usual Black--Scholes Greek correction acquires an additional boundary contribution.  Fouque et al.~\cite{FJL} subsequently developed a spectral perturbation approach to fast mean-reverting stochastic volatility, obtaining approximations for up-and-out, double-barrier knock-out, and rebate options.  These works demonstrate in particular that the barrier creates a genuine singular-perturbation issue that is absent for European claims.

Other asymptotic approaches are also available.  Hu and Knessl~\cite{HK} obtained a singular-perturbation expansion for barrier prices under the CEV model.  Kato et al.~\cite{KTY} developed a semigroup expansion for the Cauchy--Dirichlet problem associated with continuously monitored barrier options and applied it to regular diffusion-type stochastic volatility models, including SABR.  Their semigroup approach provides a mathematically rigorous justification of the perturbation expansion under suitable regularity conditions.  These works expand with respect to a model parameter while maturity is fixed.

Our concern is different.  We derive a short-maturity expansion without introducing a perturbation parameter into the model.  The barrier itself must then be scaled with maturity.  Indeed, if the barrier is fixed away from the spot, hitting it becomes a rare event as maturity tends to zero.  We instead consider the near-barrier regime in which the distance between the barrier and the spot is of the same order as the typical short-time price fluctuation.  More precisely, for maturity $\theta\downarrow0$, initial variance $v_0$, and fixed normalized levels $b>\max\{k,0\}$, we set
\begin{equation*}
B=S_0(1+\sqrt{v_0\theta},b),\qquad
K=S_0(1+\sqrt{v_0\theta},k).
\end{equation*}
This joint scaling keeps both the strike and the barrier visible in the short-time limit and leads to a Bachelier barrier problem at leading order.  Recent work on the short-time behavior of up-and-in options with a barrier fixed away from the spot~\cite{Bures} concerns the complementary rare-event regime.

The main contribution of this paper is to provide a rigorous short-maturity expansion in this non-degenerate near-barrier regime under minimal assumptions.  In particular, we obtain an explicit remainder $o(\theta^{H+1/2})$ after the first stochastic-volatility correction.  Thus, in contrast with much of the existing barrier asymptotics, where expansions are derived through model-specific perturbation arguments, our result gives a rigorous error estimate under an abstract probabilistic limit assumption and applies both to regular and rough volatility models.  More broadly, because barrier options constitute the most elementary path-dependent exotic claims, the analysis provides a first step toward rigorous short-maturity expansions for more general exotic derivatives.

\section{Barrier option pricing under Bachelier models}
The short-term asymptotic limit of a local stochastic volatility model is shown to be a time-inhomogeneous Bachelier model. 
In this section, we develop a pricing theory of barrier options under the limit model as a preliminary; although the technique here is not very new but anyway we give some details as we could not find an adequate reference with this generality.

Throughout this section, the underlying asset price follows
\begin{equation*}
    \mathrm{d}X_t = \sqrt{v(t)} \mathrm{d}W_t,
\end{equation*}
where $W$ is a standard Brownian motion and $v$ is a deterministic, positive, and locally integrable function. We assume interest rates to be zero for brevity,
\subsection{Up-and-Out Put options}
Let two constants $b \geq k$ be fixed with $b > X_0$, and define the first hitting time
\begin{equation*}
    \tau_b=\inf\{t \geq 0 ; X_t = b\}.
\end{equation*}
The payoff of the Up-and-Out Put (UOP) option with barrier $b$, strike $k$ and maturity $T>0$ is
\begin{equation}\label{payoff}
    (k-X_T)_+ 1_{\{\tau_b > T\}}.
\end{equation}
Let $u(x,t)$ be a solution of the boundary value problem
    \begin{align}
        &\frac{\partial u}{\partial t}(x,t) + \frac{1}{2}v(t)\frac{\partial^2 u}{\partial x^2}(x,t) =0,\ \  (x,t) \in (-\infty,b) \times (-\infty,T), \label{pde}\\
        & u(x,T) = (k-x)_+,  \ \ x \in (-\infty,b), \label{terminal}\\
        & u(b,t) = 0, \ \  t \in (-\infty,T) \label{boundary}
    \end{align}
which is a continuous function in $(x,t)$ on $(-\infty,b] \times (-\infty,T]$.
Then the payoff~\eqref{payoff} 
is represented by
$u(X_{\tau_b \wedge T},\tau_b\wedge T)$.

The unique no-arbitrage price of the option at time $s < \tau_b$ is given by $u(X_s,s)$. Indeed, by It\^o's formula,
\begin{equation*}
    u(X_{\tau_b \wedge T},\tau_b\wedge T) = 
    u(X_s,s) + \int_s^{\tau_b \wedge T}\frac{\partial u}{\partial x}(X_t,t) \mathrm{d}X_t
\end{equation*}
and the second term is the profit-and-loss of a self-financing trading strategy in the underlying asset $X$.

Now we construct a solution to \eqref{pde}-\eqref{boundary}. Let
\begin{equation}\label{wt}
    w(t) = \sqrt{\int_t^T v(r)\mathrm{d}r}.
\end{equation}
The solution of the PDE \eqref{pde} with $b=\infty$ and \eqref{terminal}
is the Bachelier put option price
\begin{equation}\label{Bput}
\begin{split}
     p(x,t)&:=\mathsf{E}[(k-X_T)_+|X_t = x]\\
     & = \int \left(k-x-w(t)z\right)_+\phi(z)\mathrm{d}z\\
     &= \int_{-\infty}^{\frac{k-x}{w(t)}}
     \left(k-x-w(t)z\right)\phi(z)\mathrm{d}z\\ 
     &=(k-x)\Phi\left(\frac{k-x}{w(t)}\right) + w(t)
     \phi\left(\frac{k-x}{w(t)}\right),
\end{split}
\end{equation}
where $\phi$ and $\Phi$ are respectively the standard normal density and distribution functions. 
Then, by the linearity of the PDE \eqref{pde},
\begin{equation}\label{uop}
    u(x,t) := 
    p(x,t) - p(2b-x,t)
\end{equation}
solves \eqref{pde}-\eqref{boundary}.

Let $p_\text{B}$ denote the Bachelier put price function:
\begin{equation*}
    p_{\text{B}}(x,\sigma) =
    (k-x)\Phi\left(\frac{k-x}{\sigma}\right) + \sigma
     \phi\left(\frac{k-x}{\sigma}\right).
\end{equation*}
Note that $p$ defined by \eqref{Bput} is written 
as
\begin{equation*}
    p(x,t) = p_{\text{B}}(x,w(t)).
\end{equation*}
Then, the Bachelier UOP price function is defined as
\begin{equation*}
    p_\text{B}^{\text{UO}}(x,\sigma)
    =  p_\text{B}(x,\sigma) - p_\text{B}(2b-x,\sigma),
\end{equation*}
so that
\begin{equation*}
    u(x,t) = p_{\text{B}}^\text{UO}(x,w(t)).
\end{equation*}

\subsection{Up-and-Out Call options}
Again for $b \geq k$ with $b > X_0$,
the payoff of the Up-and-Out Call (UOC) option with barrier $b$, strike $k$ and maturity $T>0$ is
\begin{equation*}
    (X_T-k)_+ 1_{\{\tau_b > T\}}.
\end{equation*}
As in the UOP case, it suffices to find a solution 
to the boundary problem \eqref{pde} with \eqref{boundary}
and
\begin{equation*}
    u(x,T) = (x-k)_+, \ \ x \in (-\infty,b)
\end{equation*}
instead of \eqref{terminal}.
The solution is necessarily discontinuous at $(b,T)$,
which leads us to use
\begin{equation*}
    \begin{split}
        c_b(x,t) :&= \mathsf{E}[(X_T-k)_+ 1_{\{X_T < b\}}| X_t = x]\\
        &=\int_{\frac{k-x}{w(t)}}^{\frac{b-x}{w(t)}} (x + w(t) z - k)\phi(z)\mathrm{d}z \\
        &= (x-k)\left(\Phi\left(\frac{b-x}{w(t)}\right) - \Phi\left(\frac{k-x}{w(t)}\right)\right) -w(t)\left(\phi\left(\frac{b-x}{w(t)}\right) - \phi\left(\frac{k-x}{w(t)}\right)\right).
    \end{split}
\end{equation*}
Then, 
\begin{equation*}
\begin{split}
    u(x,t):&=    c_b(x,t) - c_b(2b-x,t) \\
    &=  c(x,t) - c(2b-x,t) + k-x + 2(b-k)\Phi\left(\frac{b-x}{w(t)}\right)
\end{split}
\end{equation*}
is a solution to the boundary problem,
where  $c(x,t)$ is the Bachelier call option price
\begin{equation*}
    \begin{split}
        c(x,t) & = \mathsf{E}[(X_T-k)_+|X_t = x] \\
        &= 
        \int \left(x+w(t)z - k\right)_+\phi(z)\mathrm{d}z\\
     &= \int_{\frac{k-x}{w(t)}}^\infty
     \left(x+w(t)z-k\right)\phi(z)\mathrm{d}z\\ 
     &=(x-k)\Phi\left(\frac{x-k}{w(t)}\right) + w(t)
     \phi\left(\frac{x-k}{w(t)}\right).
    \end{split}
\end{equation*}

\subsection{Other types of barrier options}

For $b\leq k$ with $b < X_0$, 
the payoff of the Down-and-Out Call (DOC) option with barrier $b$, strike $k$ and maturity $T>0$ is
\begin{equation*}
    (X_T-k)_+ 1_{\{\tau_b > T\}}.
\end{equation*}
By the same idea as for the UOP, its unique no-arbitrage price at $s < \tau_b$ is given by 
\begin{equation*}
 c(X_s,s) - c(2b-X_s,s).
\end{equation*}

For $b\leq k$ with $b < X_0$, 
the payoff of the Down-and-Out Put (DOP) option with barrier $b$, strike $k$ and maturity $T>0$ is
\begin{equation*}
    (k-X_T)_+ 1_{\{\tau_b > T\}}.
\end{equation*}
By the same idea as for the UOC, its unique no-arbitrage price at $s < \tau_b$ is given by
\begin{equation*}
 p_b(X_s,s) - p_b(2b-X_s,s),
\end{equation*}
where 
\begin{equation*}
    p_b(x,t) := \mathsf{E}[(k-X_T)_+1_{\{X_T > b\}}\mid X_t=x].
\end{equation*}

Knock-in options are replicated by vanilla options and knock-out options. 
For example, for $b\geq k$ with $b > X_0$, 
the payoff of the Up-and-In Put (UIP) option with barrier $b$, strike $k$ and maturity $T>0$ is
\begin{equation*}
    (k-X_T)_+ 1_{\{\tau_b \leq  T\}} = (k-X_T)_+ -  (k-X_T)_+ 1_{\{\tau_b >  T\}} .
\end{equation*}
Since the second term is the payoff of the UOP option with the price given by \eqref{uop}, the unique no-arbitrage price of the UIP at time $s < \tau_b$ is given by $p(2b-X_s,s)$.
Other knock-in options are also treated similarly.

\section{The convergence of the Black-Scholes price}

We next discuss the Black--Scholes model as further preparation for the asymptotic analysis in the next section.
 Consider a time-inhomogeneous Black-Scholes model
\begin{equation*}
    \frac{\mathrm{d} S_t}{S_t} = \sqrt{v(t)} \mathrm{d}W_t
\end{equation*}
for an asset price process $S$, where $W$ and $v$ are as before.
By the same argument as in the previous section,
the unique no-arbitrage price at time $0$ of the UOP option with barrier $B>S_0$, strike $K < B$ and maturity $T>0$ is given by $u(S_0,0)$, where
$u(s,t)$ is now a solution of the boundary value problem
    \begin{align}
        &\frac{\partial u}{\partial t}(s,t) + \frac{1}{2}v(t)s^2\frac{\partial^2 u}{\partial s^2}(s,t) =0,\ \  (s,t) \in (0,B) \times (-\infty,T), \label{BSpde}\\
        & u(s,T) = (K-s)_+,  \ \ s \in (0,B), \label{BSpayoff}\\
        & u(B,t) = 0, \ \  t \in (-\infty,T).
    \end{align}
Using the Black-Scholes put price function
\begin{equation*}
\begin{split}
     & u_{\text{BS}}(s,t) =s p_{\mathrm{BS}}\left(\log \frac{K}{s}, w(t)\right),\\
     & p_{\mathrm{BS}}(\ell,\sigma) = e^\ell\Phi\left(\frac{\ell}{\sigma} + \frac{\sigma}{2}\right)
     - \Phi\left(\frac{\ell}{\sigma} - \frac{\sigma}{2}\right)
\end{split}
\end{equation*}
solving \eqref{BSpde} and \eqref{BSpayoff} with $u=u_{\text{BS}}$, 
we can find a solution to the boundary value problem as
\begin{equation*}
    u(s,t) = u_{\text{BS}}(s,t)- \frac{s}{B}u_{\text{BS}}\left(\frac{B^2}{s},t\right).
\end{equation*}
We define the Black-Scholes UOP price
\begin{equation*}
    p_\text{BS}^\text{UO}(s,\sigma)
    = sp_\text{BS}\left(\log \frac{K}{s}, \sigma\right)
    - B p_\text{BS}\left(\log \frac{Ks}{B^2}, \sigma\right),
\end{equation*}
so that
\begin{equation*}
u(S_0,0)=  p_\text{BS}^\text{UO}\left( S_0,
 w(0)\right)
\end{equation*}
gives the UOP price at time $0$, where $w(t)$ is defined by \eqref{wt}. 

Now, we let
$B=S_0(1 + \sqrt{v_0\theta}b)$ with $b>0$, $K =S_0(1 + \sqrt{v_0\theta}k)$ with $k < b$, $T = \theta$ and
 $v_0 = v(0)$,
and consider the scaling limit of
the Black-Scholes UOP price
$ p_\text{BS}^\text{UO}\left( S_0, a\sqrt{\theta}\right)$
as $\theta \to 0$, where $a>0$.
A straightforward computation shows that
\begin{equation*}
     p_\text{BS}^\text{UO}\left( S_0, a\sqrt{\theta}\right) = S_0\sqrt{v_0\theta}p_\text{B}^\text{UO}\left(0, \frac{a}{\sqrt{v_0}}\right) + O(\theta),
\end{equation*}
where $p_\text{B}^\text{UO}$ is defined by \eqref{uop},
which shows that the Black-Scholes price converges to the corresponding Bachelier price.
Further, assuming $a = a(\theta) \to \sqrt{v_0}$ as $\theta \to 0$, we have
\begin{equation*}
     p_\text{BS}^\text{UO}\left( S_0, a\sqrt{\theta}\right) = S_0\sqrt{v_0\theta}p_\text{B}^\text{UO}\left(0, \frac{a}{\sqrt{v_0}}\right) 
     + S_0v_0\theta \gamma(b,k)  + 
     o(\theta),
     \end{equation*}
     where
     \begin{equation}\label{gamma}
     \gamma(b,k)=
         \frac{k}{2}\left(\phi\left(k\right)-\phi(k-2b)\right) - b(b-k)\Phi(k-2b).
     \end{equation}
Substituting
\begin{equation*}
    a = \frac{w(0)}{\sqrt{\theta}},
\end{equation*} 
we have
\begin{equation*}
  p_\text{BS}^\text{UO}\left( S_0,
  w(0)\right)   = P_0 + P_1 + o(\theta),
\end{equation*}
where $P_0$ is the Bachelier UOP price (with $\mathrm{d}X_t = S_0\sqrt{v(t)}\mathrm{d}W_t$ and $X_0 = S_0$; see \eqref{uop}) given by 
\begin{equation}\label{p0}
\begin{split}
P_0  = \, & (K -S_0)\Phi\left(\frac{K-S_0}{S_0 w(0)}\right) + S_0 w(0)
     \phi\left(\frac{K-S_0}{S_0 w(0)}\right)\\
     &- (K + S_0 -2B)\Phi\left(\frac{K+S_0-2B}{S_0 w(0)}\right) -S_0 w(0)
     \phi\left(\frac{K+S_0-2B}{S_0 w(0)}\right)
\end{split}
\end{equation}
that is of $O(\sqrt{\theta})$, and  $P_1 = O(\theta)$  given by
\begin{equation}\label{p1}
    \begin{split}
        P_1 = \, &\frac{1}{2}(K-S_0)w(0) \left(
        \phi\left(\frac{K-S_0}{S_0 w(0)}\right)
        -
          \phi\left(\frac{K+S_0-2B}{S_0 w(0)}\right)\right)\\
          & - (B-S_0)\frac{B-K}{S_0} \Phi\left(\frac{K+S_0-2B}{S_0 w(0)}\right)\\
          = \, & S_0v_0\theta \gamma(b,k)+ o(\theta).
    \end{split}
\end{equation}
    
%

\section{Short-term asymptotics}
Now we consider an abstract (local) stochastic volatility model
\begin{equation*}
    \frac{\mathrm{d}S_t}{S_t} = \sqrt{V_t}\mathrm{d}W_t,
\end{equation*}
where $V$ is a positive continuous adapted process on a filtered probability space
$(\Omega,\mathscr{F},\mathsf{P},\{\mathscr{F}_t\})$, which supports the $\{\mathscr{F}_t\}$-standard Brownian motion $W$.
We assume that $\mathscr{F}_0$ is $\mathsf{P}$-trivial, that is,
\begin{equation*}
 \mathsf{P}(A)\in\{0,1\}\qquad\text{for every }A\in\mathscr{F}_0.
\end{equation*}
Consequently, every $\mathscr{F}_0$-measurable random variable is almost surely
constant.  In particular, the adaptedness of $S$ and $V$ implies that $S_0$
and $V_0$ are deterministic and that $V_0=v_0$ almost surely.
Let $ v(t) = \mathsf{E}[V_t]$ and $v_0 = v(0)$.
We consider the pricing of the UOP option on the asset $S$ with barrier $B = S_0(1 + \sqrt{v_0\theta}b)$,
strike $K = S_0(1 + \sqrt{v_0\theta}k)$ and maturity $\theta > 0$,
where $b >\max\{k,0\}$ . The option payoff is
\begin{equation*}
    (K-S_\theta)_+ 1_{\left\{\sup_{t \in [0,\theta]}S_t \leq B\right\}}
    = \sqrt{v_0\theta}S_0(k-X^\theta_1)_+1_{\left\{\tau^\theta_b > 1\right\}},
\end{equation*}
where
\begin{equation*}
    X^\theta_u = \frac{1}{\sqrt{v_0\theta}}\left(\frac{S_{\theta u}}{S_0}-1\right), \ \ 
      \tau^\theta_b=\inf\{t \geq 0 ; X^\theta_t = b\}.
\end{equation*}
An arbitrage-free option price at time $0$ is given by
\begin{equation*}
p^\text{UO}:=
    \sqrt{v_0\theta}S_0 
    \mathsf{E}\left[(k-X^\theta_1)_+1_{\left\{\tau^\theta_b > 1\right\}}\right].
\end{equation*}
Let 
\begin{equation*}
    Y^\theta_u = \frac{1}{\theta^H}\left(\frac{V_{\theta u}}{v(\theta u)}-1\right)
\end{equation*}
for $H \in (0,1/2]$.

\begin{theorem}\label{thm1}
    Assume that  $v(t)$ is continuous at $t=0$, $Y^{\theta}_1$ is uniformly integrable, and 
\begin{equation}\label{conv}
    \left(X^\theta_1,Y^\theta_1, \sup_{u \in [0,1]}X^\theta_u\right)
    \to (\xi,\eta,\zeta)
\end{equation}
in law as $\theta \to 0$. Then,
\begin{equation}\label{main}
\begin{split}
   p^\text{UO} = 
         p_\text{BS}^\text{UO}\left( S_0,
  \sqrt{\int_0^\theta v(r)\mathrm{d}r}\right) 
         + \alpha \sqrt{v_0}S_0 \theta^{H+1/2}+ o(\theta^{H+1/2}),
\end{split} 
\end{equation}
where
\begin{equation*}
    \alpha =     \frac{1}{2}\int_0^1 u^H \int_{-\infty}^b 
        \mathsf{E}\left[\eta \,\Big| \,\zeta \leq \frac{b}{\sqrt{u}}, \xi =\frac{x}{\sqrt{u}} \right] \psi(b-x,k-x,1-u) 
         \psi\left(b,x,u\right)\, \mathrm{d}x \mathrm{d}u
\end{equation*}
and
\begin{equation*}
\psi(m,x,t) = \phi(x,t)-\phi(2m-x,t), \ \ 
    \phi(x,t) = \frac{1}{\sqrt{2\pi t}} \exp\left(-\frac{x^2}{2t} \right).
\end{equation*}
\end{theorem}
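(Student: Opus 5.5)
We will compare the true price with the Black--Scholes barrier price through the PDE solution of Section 3, using It\^o's formula on the killed process. Let $u(s,t)$ be the solution of the time-inhomogeneous Black--Scholes barrier problem \eqref{BSpde}--\eqref{BSpayoff} with $v(t)=\mathsf{E}[V_t]$ and $T=\theta$, so that $u(S_0,0)=p_\text{BS}^\text{UO}(S_0,\sqrt{\int_0^\theta v(r)\mathrm{d}r})$. Since $u$ vanishes on the barrier and matches the payoff at maturity, applying It\^o's formula to $u(S_t,t)$ up to $\tau\wedge\theta$ (with $\tau$ the hitting time of $B$), and using the PDE, gives
\begin{equation*}
p^\text{UO}-u(S_0,0)=\frac12\,\mathsf{E}\left[\int_0^{\theta}1_{\{\tau>t\}}\,(V_t-v(t))\,S_t^2\,\partial_s^2u(S_t,t)\,\mathrm{d}t\right],
\end{equation*}
once the stochastic integral is shown to be a true martingale. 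This uses boundedness of $\partial_s u$ on $(0,B]\times[0,\theta)$. A localization near $t=\theta$ and near the corner $(K,\theta)$ is needed, since $\partial_s^2u$ blows up like $(\theta-t)^{-1/2}$ there, which is still integrable.

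Next we rescale time $t=\theta u$ and space $S_t=S_0(1+\sqrt{v_0\theta}X^\theta_u)$. Write $V_t-v(t)=v(\theta u)\theta^H Y^\theta_u$. Then $S_t^2\partial_s^2u$, evaluated at the rescaled point, equals $\sqrt{v_0\theta}^{-1}S_0$ times the Bachelier gamma of the killed Bachelier put. To leading order this gamma is $\psi(b-x,k-x,1-u)$, by \eqref{uop} and the Section 3 asymptotics: the second $x$-derivative of $p(x)-p(2b-x)$ at time-to-maturity $1-u$ is $\phi(k-x,1-u)-\phi(2b-k-x,1-u)$. The difference is thus $\tfrac12\sqrt{v_0}S_0\theta^{H+1/2}$ times
\begin{equation*}
\int_0^1\mathsf{E}\left[1_{\{\tau^\theta_b>u\}}Y^\theta_u\,g_\theta(X^\theta_u,u)\right]\mathrm{d}u ,
\end{equation*}
where $g_\theta\to\psi(b-\cdot,k-\cdot,1-u)$ locally uniformly, with a correction of relative order $\sqrt\theta$.

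We then pass to the limit. By self-similarity of the setup we may identify the triple at intermediate time $u$ with the terminal triple at maturity $\theta u$. Thus $(X^\theta_u,Y^\theta_u,\sup_{r\le u}X^\theta_r)$ has the law of $(\sqrt{u}X^{\theta u}_1,u^HY^{\theta u}_1,\sqrt u\sup X^{\theta u})$, up to the factor $v(\theta u)/v(\theta u\cdot)$, which tends to $1$ by continuity of $v$ at $0$. By \eqref{conv} this converges to $(\sqrt u\xi,u^H\eta,\sqrt u\zeta)$. Uniform integrability of $Y^\theta_1$, together with boundedness and continuity of $g$ away from $u=1$, gives convergence of each expectation to
\begin{equation*}
u^H\,\mathsf{E}\left[\eta\,1_{\{\zeta\le b/\sqrt u\}}\,\psi(b-\sqrt u\xi,k-\sqrt u\xi,1-u)\right].
\end{equation*}
Two facts are needed here. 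First, the indicator is a discontinuity, which requires that $\{\zeta=b/\sqrt u\}$ be null. Second, $\xi$ is standard normal, since $\mathsf{E}[Y]=0$ forces the limit to be Bachelier. Disintegrating with respect to $\xi$ then gives the stated integrand. The factor $\psi(b,x,u)$ is the density of $\sqrt u\xi$ restricted to $\{\zeta\le b/\sqrt u\}$, by the reflection principle for the limiting Brownian motion. Dominated convergence in $u$ finishes the argument.

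The main obstacle is the uniform control needed to interchange limit and $u$-integral near $u=1$, where the gamma is singular near $x=k$. We also need to justify the weak-convergence step with the discontinuous killing indicator, and the true-martingale property. For the first, we would bound $|g_\theta(x,u)|\le C(1-u)^{-1/2}$ uniformly and use $\sup_\theta\mathsf{E}|Y^\theta_u|\le Cu^H$ from uniform integrability, so the integrand is dominated by $C(1-u)^{-1/2}$. For the indicator, we would use that the law of the limiting running maximum of a Brownian motion has no atoms.
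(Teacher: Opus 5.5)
Your route is correct, but it differs from the paper's. The paper applies It\^o's formula to the killed \emph{Bachelier} price $p^\theta_b$ in normalized coordinates. Its drift therefore involves $(1+\sqrt{v_0\theta}X^\theta_u)^2V_{\theta u}/v(\theta u)-1$, which contains three kinds of term:
\begin{itemize}
\item the stochastic-volatility term $\theta^HY^\theta_u$;
\item the lognormal term $2\sqrt{v_0\theta}X^\theta_u$;
\item cross terms, which need the weighted bounds \eqref{domination}.
\end{itemize}
The $X$-term yields $S_0v_0\theta\int_0^1\bar\psi(b,k,u)\,\mathrm{d}u$. To turn $P_0$ plus this into the Black--Scholes leading term, the paper must prove \eqref{gamma2}, which it does by comparing with the closed-form expansion \eqref{p0}--\eqref{p1} in the deterministic case.

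You instead use the time-inhomogeneous Black--Scholes barrier price fitted to $v(t)=\mathsf{E}[V_t]$ as the It\^o function. Your drift is then exactly $\tfrac12(V_t-v(t))S_t^2\partial_s^2u$, so only the $Y$-term appears and the leading term is exact. Neither \eqref{gamma2} nor the cross-term estimates are needed. What you need instead is control of the Black--Scholes dollar gamma, and this is easy. Each of the two reflected put terms has dollar gamma bounded by $B/(\sqrt{2\pi}\,w(t))$ on $(0,B]$, and $w(\theta u)\geq c\sqrt{\theta(1-u)}$. This single bound already absorbs the factor $(1+\sqrt{v_0\theta}x)^2$ and gives your $C(1-u)^{-1/2}$ domination. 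The normalized gamma also converges locally uniformly to $\psi(b-x,k-x,1-u)$ for fixed $u<1$. In exchange, the paper's route produces \eqref{gamma2} and $p^{\mathrm{UO}}_{\mathrm{BS}}=P_0+P_1+o(\theta)$ as byproducts, and it uses both in Corollary~\ref{cor1} and in the $H=1/2$ rough Bergomi check.

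One step is misjustified. Saying that $\xi$ is standard normal ``since $\mathsf{E}[Y]=0$ forces the limit to be Bachelier'' is not an argument: $\mathsf{E}[Y^\theta_u]=0$ holds by the definition of $v$ and says nothing about $\xi$. Moreover, you need the \emph{joint} law of $(\xi,\zeta)$, not just the normality of $\xi$. This is what makes $\psi(b,x,u)$ the sub-density of $\sqrt u\,\xi$ on $\{\sqrt u\,\zeta\le b\}$, and what rules out an atom of $\zeta$ at $b/\sqrt u$. The correct reason, as in the paper, is that
\begin{equation*}
\langle X^\theta\rangle_u=\frac{1}{v_0}\int_0^u\bigl(1+\sqrt{v_0\theta}X^\theta_r\bigr)^2V_{\theta r}\,\mathrm{d}r\to u ,
\end{equation*}
by continuity of $S$ and $V$ at $0$ with $V_0=v_0$. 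The martingale functional CLT then gives convergence of $X^\theta$ to a Brownian motion $W$. Since the running maximum is a continuous functional, $(\xi,\zeta)$ has the same law as $(W_1,\sup_{t\le1}W_t)$.

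Finally, the scaling relations $X^\theta_u=\sqrt u\,X^{\theta u}_1$ and $Y^\theta_u=u^HY^{\theta u}_1$, and the one for the running maximum, hold exactly and pathwise. No correction factor involving $v$ appears.
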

\begin{remark} \upshape
    Fukasawa~\cite{F21} derives an asymptotic expansion of the put option price and the implied volatility under
    the assumption $(X^\theta_1,Y^\theta_1) \to (\xi,\eta)$.
    We extend this approach to UOP options by additionally assuming the joint convergence of the running maximum $\sup_u X^\theta_u$.
\end{remark}

{\it Proof:}
Note that
$X^\theta_u = \sqrt{u}X^{\theta u}_1$, $Y^\theta_u = u^H Y^{\theta u}_1$
and
\begin{equation*}
    \sup_{w \in [0,u]} X^\theta_w = \sqrt{u}\sup_{w \in [0,1]} X^{\theta u}_w.
\end{equation*}
Therefore, under \eqref{conv}, we have
\begin{equation}\label{conv2}
    \left(X^\theta_u,Y^\theta_u, \sup_{w \in [0,u]}X^\theta_w\right)
    \to (\sqrt{u}\xi,u^H\eta,\sqrt{u}\zeta)
\end{equation}
in law for each $u \in [0,1]$ as $\theta \to 0$.
Note also that
\begin{equation}\label{qv}
    \mathrm{d}\langle X^\theta \rangle_u = \frac{1}{v_0}\left(\frac{S_{\theta u}}{S_0}\right)^2 V_{\theta u}\mathrm{d}u = \frac{1}{v_0}(1 + \sqrt{v_0\theta}X^\theta_u)^2 V_{\theta u}\mathrm{d}u
\end{equation}
and so,
\begin{equation*}
    \langle X^\theta \rangle_u \to u
\end{equation*}
in probability.
Hence, the martingale central limit theorem implies that $X^\theta$
  converges weakly to a standard Brownian motion.
 In particular, 
 the law of $(\xi, \zeta)$ is identified as that of
\begin{equation*}
   \left(W_1, \sup_{t \in [0,1]}W_t \right).
\end{equation*}
Let 
\begin{equation}
    p^\theta_b(x,t) = p(x,t) - p(2b-x,t),
\end{equation}
where $p(x,t)$ is defined as \eqref{Bput} with $T=1$ and $w$ replaced by
\begin{equation*}
    w^\theta(t) := \sqrt{\frac{1}{v_0}\int_t^1 v(\theta u) \mathrm{d}u}.
\end{equation*}
Note that
\begin{equation*}
    p^\theta_b(0,0) =p_\text{B}^\text{UO}\left(0, \sqrt{\frac{1}{v_0 \theta}\int_0^\theta v(r)\mathrm{d}r}\right)
\end{equation*}
and
\begin{equation*}
    S_0\sqrt{v_0\theta}p^\theta_b(0,0) = P_0,
\end{equation*}
where $P_0$ is defined by \eqref{p0}.
Since $p^\theta_b$ is a continuous solution of
\begin{equation}\label{pde2}
    \begin{split}
         &\frac{\partial p^\theta_b}{\partial t}(x,t) + \frac{v(\theta t)}{2v_0}\frac{\partial^2 p^\theta_b}{\partial x^2}(x,t) =0,\ \  (x,t) \in (-\infty,b) \times (-\infty,1), \\
        & p^\theta_b(x,1) = (k-x)_+, \ \ x \in (-\infty,b), \\
        & p^\theta_b(b,t) = 0, \ \  t \in (-\infty,1),
    \end{split}
\end{equation}
the payoff is expressed as
\begin{equation*}
    (k-X^\theta_1)_+1_{\left\{\tau^\theta_b > 1\right\}}
    =  p^\theta_b(X^\theta_{1 \wedge \tau^\theta_b}, 1 \wedge \tau^\theta_b)
\end{equation*}
and by It\^o's formula and \eqref{qv}, we have
\begin{equation*}\begin{split}
    p^\theta_b(X^\theta_{\tau^\theta_b \wedge 1},&\tau^\theta_b\wedge 1)  \\ = 
    p^\theta_b(0,0)  &+ \int_0^{\tau^\theta_b \wedge 1}\frac{\partial p^\theta_b}{\partial x}(X^\theta_u,u) \mathrm{d}X^\theta_u \\
    & -\int_0^{\tau^\theta_b \wedge 1}
    \frac{\partial p^\theta_b}{\partial t}(X^\theta_u,u)\left((1 + \sqrt{v_0\theta}X^\theta_u)^2 \frac{V_{\theta u}}{v(\theta u)}-1\right)\mathrm{d}u. 
\end{split}
\end{equation*}
Since $X^\theta_u \geq -(v_0\theta)^{-1/2}$ and $p^\theta_b(x,s)$ is bounded on $[-(v_0\theta)^{-1/2},b]\times [0,1]$,
the bounded convergence theorem with a localization argument yields
\begin{equation*}
\begin{split}
    &  \mathsf{E}[ p^\theta_b(X^\theta_{\tau_b^\theta \wedge 1},\tau_b^\theta\wedge 1) ] \\ & = 
   p^\theta_b(0,0)- \mathsf{E}[\int_0^{\tau^\theta_b \wedge 1}
    \frac{\partial p^\theta_b}{\partial t}(X^\theta_u,u)\left((1 + \sqrt{v_0\theta}X^\theta_u)^2 \frac{V_{\theta u}}{v(\theta u)}-1\right)\mathrm{d}u].
\end{split}  
   \end{equation*}
We justify the passage to the limit as in Step~1 of the proof of
Fukasawa~\cite{F21}.  First, the continuity and positivity of $v$ at zero
imply that, for all sufficiently small $\theta$, there are constants
$0<c<C<\infty$ such that
\begin{equation*}
 c\sqrt{1-u}\leq w^\theta(u)\leq C\sqrt{1-u},\qquad 0\leq u<1.
\end{equation*}
Moreover,
\begin{equation*}
    \frac{\partial p^\theta_b}{\partial t}(x,t) =
    \frac{v(\theta t)}{2v_0w^\theta(t)} \left(
    \phi\left(\frac{k+x-2b}{w^\theta(t)}\right)
    - \phi\left(\frac{k-x}{w^\theta(t)}\right) \right).
\end{equation*}
Consequently, using the Gaussian decay, for $j=0,1,2$ there is a constant
$C_j$ independent of $\theta$ such that
\begin{equation}\label{domination}
 \sup_{x<b}|x|^j\left|\frac{\partial p^\theta_b}{\partial t}(x,u)\right|
 \leq \frac{C_j}{\sqrt{1-u}},\qquad 0\leq u<1.
\end{equation}
This is the barrier analogue of the bound used in~\cite{F21}.

By the scaling relation $Y^\theta_u=u^H Y^{\theta u}_1$ and the uniform
integrability of $\{Y^\delta_1\}_{\delta>0}$,
\begin{equation*}
 \sup_{0<\theta\leq\theta_0}\mathsf{E}[|Y^\theta_u|]\leq C u^H,
 \qquad 0<u\leq1,
\end{equation*}
Thus the terms containing $Y^\theta_u$ are dominated in expectation by a
constant times $u^H(1-u)^{-1/2}$, whereas the remaining terms are dominated
by a constant times $(1-u)^{-1/2}$.  Both bounds are integrable over
$(0,1)$.  For each fixed $u\in(0,1)$, \eqref{conv2}, uniform integrability,
and $\mathsf{P}(\sqrt{u}\zeta=b)=0$ give convergence of the corresponding
expectations.  Finally, expanding
\begin{equation*}
 (1+\sqrt{v_0\theta}X^\theta_u)^2(1+\theta^H Y^\theta_u)-1
\end{equation*}
and using \eqref{domination} shows that the terms containing
\begin{equation*}
 \theta(X^\theta_u)^2,\qquad
 \theta^{H+1/2}X^\theta_uY^\theta_u,\qquad
 \theta^{H+1}(X^\theta_u)^2Y^\theta_u
\end{equation*}
are $o(\theta^H)$ after integration.
The dominated convergence theorem therefore yields
   \begin{equation*}
       \begin{split}
            \mathsf{E}[&\int_0^{\tau^\theta_b \wedge 1}
    \frac{\partial p^\theta_b}{\partial t}(X^\theta_u,u)\left((1 + \sqrt{v_0\theta}X^\theta_u)^2 \frac{V_{\theta u}}{v(\theta u)}-1\right)\mathrm{d}u]\\
    = & 2\sqrt{v_0\theta}\int_0^1  \mathsf{E}\left[
     \frac{\partial p^0_b}{\partial t}(\sqrt{u}\xi,u)\sqrt{u}\xi  1_{\{\sqrt{u}\zeta \leq b\}}  \right]\mathrm{d}u
    \\ &+\theta^{H}\int_0^1  \mathsf{E}\left[
     \frac{\partial p^0_b}{\partial t}(\sqrt{u}\xi,u) u^H \eta 1_{\{\sqrt{u}\zeta \leq b\}} 
    \right] \mathrm{d}u + o(\theta^H).
       \end{split}
   \end{equation*}
The time derivative converges to
    \begin{equation*}
       \frac{\partial p^0_b}{\partial t}(x,t) =
       \frac{1}{2\sqrt{1-t}} \left(
\phi\left(\frac{k+x-2b}{\sqrt{1-t}}\right)
- \phi\left(\frac{k-x}{\sqrt{1-t}}\right) \right) = -\frac{1}{2}\psi(b-x,k-x,1-t)
   \end{equation*}
   uniformly on compact sets. Therefore, using \eqref{RP} and \eqref{TC},
we have
\begin{equation*}
    \begin{split}
        &\mathsf{E}\left[
     \frac{\partial p^0_b}{\partial t}(\sqrt{u}\xi,u)\sqrt{u}\xi  1_{\{\sqrt{u}\zeta \leq b\}}  \right]\\ &= 
     -\frac{\sqrt{u}}{2}\int_{-\infty}^{b/\sqrt{u}} x
     \psi(b-\sqrt{u}x,k-\sqrt{u}x,1-u)
    \psi\left(\frac{b}{\sqrt{u}},x,1\right)\, \mathrm{d}x
     \\ &= 
     -\frac{1}{2}\int_{-\infty}^b x\psi(b-x,k-x,1-u)\psi\left(b,x,u\right)\, \mathrm{d}x
     \\ &= -\frac{1}{2}\bar{\psi}(b,k,u),
    \end{split}
\end{equation*}
where
\begin{equation*}
\begin{split}
&\bar{\psi}(m,x,t)=    \bar{\phi}(m,x,t)-\bar{\phi}(m,2m-x,t), \\
      &\bar{\phi}(m,x,t) = \phi(x)\left(2m-tx-2(m-tx)\Phi\left(\frac{m-tx}{\sqrt{t(1-t)}}\right)\right).
\end{split}
\end{equation*}
Similarly,
\begin{equation*}
    \begin{split}
       &  \mathsf{E}\left[
     \frac{\partial p^0_b}{\partial t}(\sqrt{u}\xi,u) u^H \eta 1_{\{\sqrt{u}\zeta \leq b\}} 
    \right] \\
    &= -\frac{u^H}{2}\int_{-\infty}^{b/\sqrt{u}}
     \mathsf{E}\left[\eta \,\Big| \,\zeta \leq \frac{b}{\sqrt{u}}, \xi =x  \right]
      \psi(b-\sqrt{u}x,k-\sqrt{u}x,1-u)
    \psi\left(\frac{b}{\sqrt{u}},x,1\right)\, \mathrm{d}x\\
    &= -\frac{u^H}{2}\int_{-\infty}^b
     \mathsf{E}\left[\eta \,\Big| \,\zeta \leq \frac{b}{\sqrt{u}}, \xi =\frac{x}{\sqrt{u}}  \right]
      \psi(b-x,k-x,1-u)
    \psi\left(b,x,u\right)\, \mathrm{d}x.
    \end{split}
\end{equation*}
It remains to show
\begin{equation}\label{gamma2}
    \int_0^1\bar{\psi}(b,k,u) \mathrm{d}u = \gamma(b,k).
\end{equation}
To verify this identity, compute the same Black--Scholes barrier price in two
ways.  When $V_t=v(t)$ is deterministic, $Y^\theta\equiv0$, and the stopped
It\^o calculation above, before using \eqref{gamma2}, gives
\begin{equation*}
 p^{\mathrm{UO}}
 =P_0+S_0v_0\theta\int_0^1\bar\psi(b,k,u)\,\mathrm{d}u+o(\theta).
\end{equation*}
On the other hand, the direct expansion of the closed-form Black--Scholes
barrier price in the previous section gives
$p^{\mathrm{UO}}=P_0+S_0v_0\theta\gamma(b,k)+o(\theta)$; see
\eqref{p0}--\eqref{p1}.  Comparing the two independently computed
coefficients proves \eqref{gamma2}.  This comparison is simply an alternative
calculation of the same Black--Scholes price and does not use the conclusion
\eqref{main}.
 \hfill{$\square$}

\begin{remark}\upshape
The equality \eqref{gamma2} holds only for $k \leq b$.
Indeed, it is easy to see $\bar{\psi}(b,k,u) = -\bar{\psi}(b,2b-k,u)$ while \eqref{gamma} does not have this skew symmetry. The discrepancy for $k>b$ can also be observed numerically; see Figure~\ref{fig:gamma}.
\end{remark}

\begin{figure}[htbp]
    \centering
     \includegraphics[width=9cm]{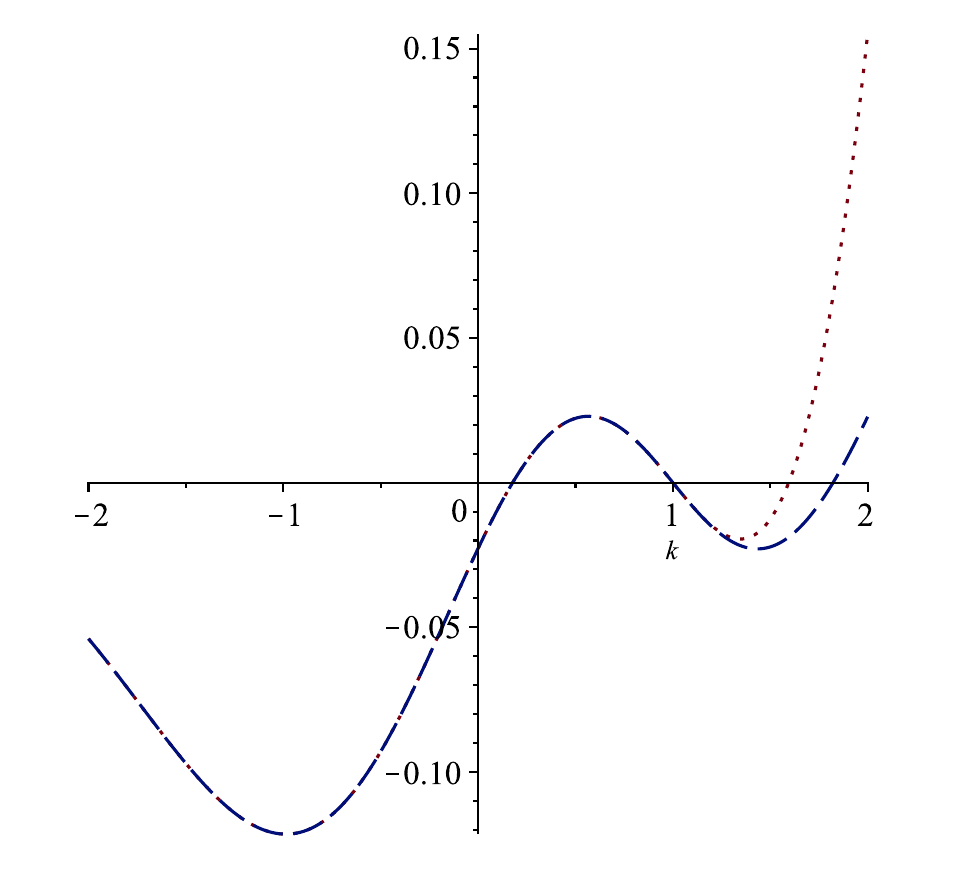}
    \caption{The dotted curve is $k \mapsto \gamma(1,k)$
    defined by  \eqref{gamma}. The dashed curve is by a numerical integration of $\bar{\psi}(1,k,u)$ with respect to $u$ over $[0,1]$.}
    \label{fig:gamma}
\end{figure}

\begin{corollary}\label{cor1}
Under the assumptions of Theorem~\ref{thm1} with $H=1/2$, if 
$\mathsf{E}[|\eta\xi|] < \infty$ and
\begin{equation*}
 \eta^\prime:=   \eta -\mathsf{E}[\eta \xi] \xi
\end{equation*}
is independent of $(\xi,\zeta)$, then,
\begin{equation*}
\begin{split}
    p^\text{UO} &= 
         p_\text{BS}^\text{UO}\left( S_0,
  w(0)\right) 
         + \frac{\mathsf{E}[\eta \xi]}{2}\gamma(b,k)\sqrt{v_0}S_0 \theta+ o(\theta),\\
         & =
         p_\text{BS}^\text{UO}\left( S_0,
  w(0)\right) 
         + \frac{\mathsf{E}[\eta \xi]}{2\sqrt{v_0}}
         \left( p_\text{BS}^\text{UO}\left( S_0,
  w(0)\right) - P_0\right) + o(\theta),
\end{split}
\end{equation*}
where $P_0$ denotes the Bachelier price defined in \eqref{p0}.
\end{corollary}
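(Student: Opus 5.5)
The plan is to evaluate the coefficient $\alpha$ of Theorem~\ref{thm1} with $H=1/2$, using the decomposition $\eta=\mathsf{E}[\eta\xi]\xi+\eta'$. Since $\eta'$ is independent of $(\xi,\zeta)$, we have
\begin{equation*}
\mathsf{E}\left[\eta \,\Big|\, \zeta\le \tfrac{b}{\sqrt u},\ \xi=\tfrac{x}{\sqrt u}\right]
=\mathsf{E}[\eta\xi]\frac{x}{\sqrt u}+\mathsf{E}[\eta'].
\end{equation*}
I will first check that $\mathsf{E}[\eta']=0$. Uniform integrability of $Y^\theta_1$ together with $\mathsf{E}[Y^\theta_1]=0$, which holds because $v(t)=\mathsf{E}[V_t]$, gives $\mathsf{E}[\eta]=0$. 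Also $\mathsf{E}[\xi]=0$, since $\xi\sim N(0,1)$ by the martingale CLT step in the proof of Theorem~\ref{thm1}. Hence $\mathsf{E}[\eta']=0$, and $\mathsf{E}[|\eta\xi|]<\infty$ makes $\eta'$ integrable, so the conditional expectation is well defined.

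Substituting into $\alpha$ with $u^H=\sqrt u$, the factor $\sqrt u$ cancels against $1/\sqrt u$. This gives
\begin{equation*}
\alpha=\frac{\mathsf{E}[\eta\xi]}{2}\int_0^1\int_{-\infty}^b x\,\psi(b-x,k-x,1-u)\psi(b,x,u)\,\mathrm{d}x\,\mathrm{d}u
=\frac{\mathsf{E}[\eta\xi]}{2}\int_0^1\bar\psi(b,k,u)\,\mathrm{d}u .
\end{equation*}
This uses the inner integral identity $\int_{-\infty}^b x\psi(b-x,k-x,1-u)\psi(b,x,u)\,\mathrm{d}x=\bar\psi(b,k,u)$ established in the proof of Theorem~\ref{thm1}. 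By \eqref{gamma2}, which is valid because $k<b$, this equals $\frac{\mathsf{E}[\eta\xi]}{2}\gamma(b,k)$. Plugging into \eqref{main} with $\theta^{H+1/2}=\theta$ yields the first line of the corollary.

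For the second line, I will use \eqref{p0}--\eqref{p1}: $p_\text{BS}^\text{UO}(S_0,w(0))=P_0+P_1+o(\theta)$ with $P_1=S_0v_0\theta\gamma(b,k)+o(\theta)$. Hence $S_0\sqrt{v_0}\theta\gamma(b,k)=(p_\text{BS}^\text{UO}(S_0,w(0))-P_0)/\sqrt{v_0}+o(\theta)$, and substituting this gives the stated form.

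The main obstacle is justifying the conditional expectation formula in the form used in $\alpha$, that is, as a function of $x$ on the event $\{\zeta\le b/\sqrt u\}$, which has positive probability. Since $(\xi,\zeta)$ has a joint density, the integrand in $\alpha$ arises from $\mathsf{E}[\eta\,1_{\{\zeta\le b/\sqrt u\}}\,g(\xi)]$ for $g$ the relevant kernel. I will compute that expectation directly: split $\eta$ as above, use independence to get $\mathsf{E}[\eta' 1_{\{\cdot\}}g(\xi)]=\mathsf{E}[\eta']\mathsf{E}[1_{\{\cdot\}}g(\xi)]=0$, and treat the remaining term with the same density computation as in the theorem. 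This avoids any subtlety with regular conditional probabilities.
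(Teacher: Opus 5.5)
Your proposal is correct and follows the route the paper leaves implicit. You insert $\eta=\mathsf{E}[\eta\xi]\xi+\eta'$ into $\alpha$ with $H=1/2$, reduce the inner integral to $\bar\psi(b,k,u)$ via \eqref{TC}, apply \eqref{gamma2}, and use \eqref{p0}--\eqref{p1} for the second line.

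Your check that $\mathsf{E}[\eta']=0$, from uniform integrability, $\mathsf{E}[Y^\theta_1]=0$ and $\mathsf{E}[\xi]=0$, supplies a detail the paper does not spell out. One small correction: integrability of $\eta'$ comes from $\eta\in L^1$ (Fatou plus uniform integrability), whereas $\mathsf{E}[|\eta\xi|]<\infty$ is what makes the coefficient $\mathsf{E}[\eta\xi]$ finite.
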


\begin{remark}
    \upshape When the strike price $K$ is above the barrier level $B$, the UOP price admits a model-free formula. Assume the underlying asset $S$ to be a continuous martingale and let $\tau_B$ denote the first hitting time to $B > S_0$. By the optional sampling theorem, we have
    \begin{equation*}
        S_0 = \mathsf{E}[S_{\tau_B \wedge T}] = 
        \mathsf{E}[S_T1_{\{\tau_B > T\}}] + B \mathsf{P}(\tau_B \leq T).
    \end{equation*}
    On the other hand, when $K \geq B$, $K-S_T$ cannot be negative unless hitting the barrier before $T$ and so $(K-S_T)_+1_{\{\tau_B > T\}} = (K-S_T)1_{\{\tau_B > T\}}$,
    which in turn implies
    \begin{equation*}
        \mathsf{E}[(K-S_T)_+1_{\{\tau_B > T\}}]
= K \mathsf{P}(\tau_B>T) -\mathsf{E}[S_T 1_{\{\tau_B > T\}}]
= K-S_0 + (B-K)\mathsf{P}(\tau_B \leq T).
    \end{equation*}
    In particular, we have $\mathsf{E}[(B-S_T)_+1_{\{\tau_B > T\}}] = B-S_0$, which
    we have used for variance reduction of the Monte Carlo presented in the next section.
\end{remark}

\begin{remark}\upshape
The UOC and DOP cases contain an additional regularity difficulty that
does not arise for the UOP.  For the UOC with $b>k$ and the DOP with $b<k$,
the terminal payoff is incompatible with the absorbing boundary condition:
\begin{equation*}
 \lim_{x\uparrow b}(x-k)_+=b-k>0,\qquad
 \lim_{x\downarrow b}(k-x)_+=k-b>0,
\end{equation*}
respectively.  Equivalently, their reflection formulas contain a digital
barrier-survival component.  The time derivative of this component develops a
boundary layer of order $(1-u)^{-1}$ when the distance to the barrier is of
order $\sqrt{1-u}$.  Consequently, the integrable uniform estimate
\eqref{domination} used in the dominated-convergence argument for
Theorem~\ref{thm1} is not directly available.  Weak convergence and uniform
integrability alone therefore do not justify the analogous limit.  This
obstruction disappears in the compatible case $b=k$.

We conjecture that the expansion \eqref{main} nevertheless has natural UOC and
DOP analogues, with the corresponding Black--Scholes barrier price as the
leading term and with the first correction obtained from the appropriate
killed Bachelier pricing function and running maximum or minimum.  Such an
extension should hold under an additional quantitative nondegeneracy
condition that yields smooth killed transition densities, together
with uniform Gaussian bounds on the densities and their first derivatives near
the barrier.  These bounds should permit the digital boundary layer to be
integrated before passage to the limit, replacing the pointwise domination
used in the UOP proof.
\end{remark}

\section{Examples}
\subsection{Regular local stochastic volatility models}
For the regular models below, suppose that the variance is an It\^o
semimartingale $V_t=V_0+M_t+\int_0^t a_s\,\mathrm{d}s$, where $V_0=v_0>0$
is deterministic, $M$ is a true martingale with $M_0=0$, and
$\sup_{s\leq t_0}\mathsf{E}[|a_s|]<\infty$ for some $t_0>0$.
Then
\begin{equation*}
 v(t)=\mathsf{E}[V_t]=v_0+O(t),\qquad
 w(0)^2=\int_0^\theta v(t)\,\mathrm{d}t
 =v_0\theta+O(\theta^2),
\end{equation*}
and hence $w(0)=\sqrt{v_0\theta}+O(\theta^{3/2})$.
The Black--Scholes and Bachelier UOP prices have bounded derivatives
with respect to total volatility for the strike and barrier scaling
considered here.  Freezing the variance at its initial value therefore
changes these prices by $O(\theta^{3/2})=o(\theta)$ and preserves the
first-order regular-model expansion.  We use this initial-variance
convention in the numerical experiments below.  

The maturity-averaged forward variance can also be replaced by the
at-the-money implied variance.  Indeed, under the regularity and
Gaussian-limit assumptions of Fukasawa~\cite[Corollary~2.1 and
Remark~2.4]{F21}, the leading implied-volatility correction vanishes
at the money.  If $\sigma_{\mathrm{ATM}}(\theta)$ denotes the
Black--Scholes implied volatility of a vanilla option with strike
$S_0$ and maturity $\theta$, then, in the regular case $H=1/2$,
\begin{equation*}
 \sigma_{\mathrm{ATM}}(\theta)^2
 =\frac{1}{\theta}\int_0^\theta v(t)\,\mathrm{d}t
 +o(\theta^{1/2}),\qquad
 \sigma_{\mathrm{ATM}}(\theta)\sqrt{\theta}=w(0)+o(\theta).
\end{equation*}
Consequently, using ATM implied variance in place of the averaged
forward variance changes the Black--Scholes and Bachelier UOP prices
by $o(\theta)$ and preserves the first-order expansion.

Consider an abstract local stochastic volatility model
\begin{equation}\label{LSV}
    \frac{\mathrm{d}S_t}{S_t} = \sigma(S_t,Y_t,t)\mathrm{d}W_t, \ \ 
    \mathrm{d}Y^i_t = b^i(S_t,Y_t,t)\mathrm{d}t + \sum_{j=1}^dc^i_j(S_t,Y_t,t)\mathrm{d}B^j_t,
    \end{equation}
    where $(B^1,\dots,B^d)$ is a $d$-dimensional Brownian motion with
    \begin{equation*}
        \mathrm{d}\langle B^j, W\rangle_t = \rho^j \mathrm{d}t, \ \ \sum_{j=1}^d|\rho^j| \leq 1.
    \end{equation*}
    If $\sigma$, $b^i$ and $c^i_j$ are sufficiently regular,  by It\^o's formula, 
    $v(t) = V_0 + O(t)$, and
    \begin{equation*}
        \frac{1}{\sqrt{\theta}}(V_\theta -V_0) = S_0 \frac{\partial \sigma^2}{\partial s}(S_0,Y_0,0) \frac{1}{\sqrt{\theta}}\left(\frac{S_\theta}{S_0} -1\right)
        + \sum_{i,j}  \frac{\partial \sigma^2}{\partial y^i} c^i_j(S_0,Y_0,0) \frac{B^j_\theta}{\sqrt{\theta}} + o_p(1).
    \end{equation*}
    Therefore, for $H=1/2$,
    \begin{equation*}
        \frac{1}{\theta^H}
      \left(\frac{V_\theta}{v(\theta)} -1\right)
      \to \left( \frac{S_0}{\sqrt{v_0}}\frac{\partial \sigma^2}{\partial s}(S_0,Y_0,0) + \sum_{i,j}\rho^j c^i_j \frac{\partial}{\partial y^i} \log \sigma^2 (S_0,Y_0,0)\right)\xi + \eta^\prime
    \end{equation*}
    in law, where $\eta^\prime$ is independent of $(\xi,\zeta)$.
    Corollary~\ref{cor1} applies with
    \begin{equation*}
    \frac{\mathsf{E}[\eta \xi]}{2} =     S_0\frac{\partial \sigma}{\partial s}(S_0,Y_0,0) + \sum_{i,j}\rho^j c^i_j \frac{\partial}{\partial y^i} \log \sigma (S_0,Y_0,0)
    \sim 2 A,
    \end{equation*}
    where $A$ denotes the Black-Scholes at-the-money volatility skew under the model; see Fukasawa~\cite{F17}.
    Thus, up to an $o(\theta)$ error, the UOP price is determined by European option prices in a model-independent manner for regular models.

    Consequently, the UOP price is approximated by
    \begin{equation*}
      p^{\text{UO}} \approx 
      p_\text{BS}^\text{UO}  + \frac{2A}{\sqrt{v_0}} (
   p_\text{BS}^\text{UO} -P_0) \approx
      P_0 + \left(1 + \frac{2A}{\sqrt{v_0}}\right)P_1 
    \end{equation*}
up to $o(\theta)$, where $p_\text{BS}^\text{UO} = p_\text{BS}^\text{UO}\left( S_0,
  w(0)\right)$ is the Black-Scholes price
  and $P_0$ is the Bachelier price defined by \eqref{p0}.

When $0 \geq 2A\geq -\sqrt{v_0}$, the approximated price is an interpolation (weighted average) of the Black-Scholes and Bachelier prices. In particular, it is free from  butterfly arbitrage opportunities.
For example, consider the CEV model
\begin{equation*}
    \mathrm{d}S_t = \sigma S_t^\beta \mathrm{d}W_t,
\end{equation*}
where $\beta \in [0,1]$.
Although the model fails to satisfy the positivity assumption, a formal application of the above formula provides an approximate price
\begin{equation*}
     p^{\text{UO}} \approx 
      P_0  +  (
   p_\text{BS}^\text{UO} -P_0)\beta.
\end{equation*}
This approximation is natural because the CEV diffusion coefficient
interpolates between the Bachelier ($\beta=0$) and Black--Scholes
($\beta=1$) coefficients.  For the numerical experiment we take
$\beta=0.5$, $\sigma=2$, $\theta=1$, $S_0=100$, $B=120$, and
21 equally spaced strikes $K\in[80,120]$.  We retain the practical
initial-volatility proxy $w(0)=\sqrt{v_0\theta}=0.2$ in both
$P_0$ and $p_{\mathrm{BS}}^{\mathrm{UO}}$.  This is a formal CEV
approximation, rather than an application of the theorem with an exact
forward variance curve; the CEV process is absorbed at zero and falls
outside its strict positivity assumptions.

The three numerical experiments use $200{,}000$ production paths and
$25{,}000$ equally spaced time steps over the respective maturity,
with $\Delta t=\theta/25{,}000$.  Barrier crossing is checked only at
the grid endpoints.  For CEV we use the absorbed Euler scheme
\begin{equation*}
 S_{i+1}=\max\{S_i+\sigma S_i^\beta\Delta W_i,0\},
\end{equation*}
with zero absorbing.  For lognormal SABR we use
\begin{equation*}
 S_{i+1}=S_i(1+\sigma_i\Delta W_i),\qquad
 \sigma_{i+1}=\sigma_i
 \exp\{\nu\Delta B_i-\nu^2\Delta t/2\},
\end{equation*}
where $\mathsf{E}[\Delta W_i\Delta B_i]=\rho\Delta t$.
Thus the volatility is sampled exactly on the grid and the stock uses
its left-endpoint value.  The rough Bergomi discretization is described below.

For each strike, let
\begin{equation*}
 Y_K=(K-S_N)^+\mathbf{1}_{\{\max_{0\leq i\leq N}S_i<B\}},
 \qquad N=25{,}000.
\end{equation*}
We use the fitted control estimator
\begin{equation*}
 \widehat p_K=\overline Y_K-\widehat\beta_K
 (\overline Y_B-(B-S_0)),\qquad
 \widehat\beta_K=
 \frac{\widehat{\operatorname{Cov}}(Y_K,Y_B)}
 {\widehat{\operatorname{Var}}(Y_B)}.
\end{equation*}
The coefficients are estimated on $10{,}000$ independent pilot paths
and then held fixed for the production sample, at every strike.
The centering value $B-S_0=20$ comes from the continuously monitored
martingale identity.  It need not equal the mean of the discretely
monitored control, so the estimator retains a finite-grid bias.
The figures show pointwise 95\% Monte Carlo confidence bands and signed
approximation errors.  Each band is the Monte Carlo estimate plus or minus
$1.96$ estimated standard errors, giving an approximate pointwise 95\%
confidence interval conditional on the fitted pilot coefficients.  Reported
standard errors are unscaled; $1.96$ times the standard error is the 95\%
confidence half-width.  The bands measure sampling uncertainty only;
they exclude monitoring and process-discretization error.  At $K=B$
the estimate is identically 20 and the band vanishes by construction.

To assess the monitoring effect, we also check the barrier at every
second endpoint on the same simulated stock paths, giving $12{,}500$
monitoring dates, and fit the corresponding control on the pilot sample.
We report paired differences from the $25{,}000$-date estimates.
This comparison changes only the monitoring frequency; it does not
check convergence of the stock or volatility discretization.

We also superpose the explicit $\gamma$ correction
\begin{equation}\label{regular-gamma-approx}
 \widehat p_\gamma=p_{\mathrm{BS}}^{\mathrm{UO}}
 +cS_0v_0\theta\,\gamma(b,k),\qquad
 b=\frac{B-S_0}{S_0\sqrt{v_0\theta}},\quad
 k=\frac{K-S_0}{S_0\sqrt{v_0\theta}},
\end{equation}
where $c=\beta-1$ for CEV and $c=\nu\rho/\sigma_0$ for lognormal
SABR.  Here $\gamma$ is given by \eqref{gamma} and is normalized using
the initial variance $v_0$, while the Black--Scholes baseline retains
the total volatility specified for each experiment.  This is the explicit first correction in Corollary~\ref{cor1}.
In the figures, ``Difference correction'' denotes
$p_{\mathrm{BS}}^{\mathrm{UO}}+c(p_{\mathrm{BS}}^{\mathrm{UO}}-P_0)$,
and ``$\gamma$ correction'' denotes \eqref{regular-gamma-approx}.

Figure~\ref{fig:cev} compares the two CEV corrections with the
endpoint-monitored benchmark.  Across the 21 strikes, the mean absolute
error decreases from $0.09426$ for Black--Scholes to
$0.01248$ for the difference correction, and the maximum
absolute error decreases from $0.22547$ to $0.01747$.
The $\gamma$ correction has mean and maximum absolute errors
$0.00674$ and $0.01983$, respectively.
Increasing the monitoring frequency from $12{,}500$ to $25{,}000$
dates on the same stock paths changes the controlled estimates by at
most $0.00477$; the largest standard error of a paired
change is $0.00041$.

As an independent continuously monitored benchmark, we solve the CEV
pricing PDE on $[0,B]$ with $u(0,t)=K$ and $u(B,t)=0$, using
Crank--Nicolson with four initial backward-Euler half steps.
Refinement from 1200 spatial cells and 2048 time steps to 2400 cells
and 4096 steps changes prices by less than $0.00002$.
The refined PDE prices differ from the Monte Carlo estimates by at
most $0.01725$, or $2.60$ Monte Carlo standard errors.
Against the PDE benchmark, the difference correction has mean and
maximum absolute errors $0.00429$ and $0.01445$.
These comparisons distinguish the approximation error from sampling
uncertainty and the residual bias of the endpoint-monitored estimator.
Both corrections improve the aggregate errors substantially, but their
small Monte Carlo discrepancies should not be interpreted as exact
errors against the continuously monitored price.
\begin{figure}[htbp]
 \centering
 \includegraphics[width=0.98\linewidth]{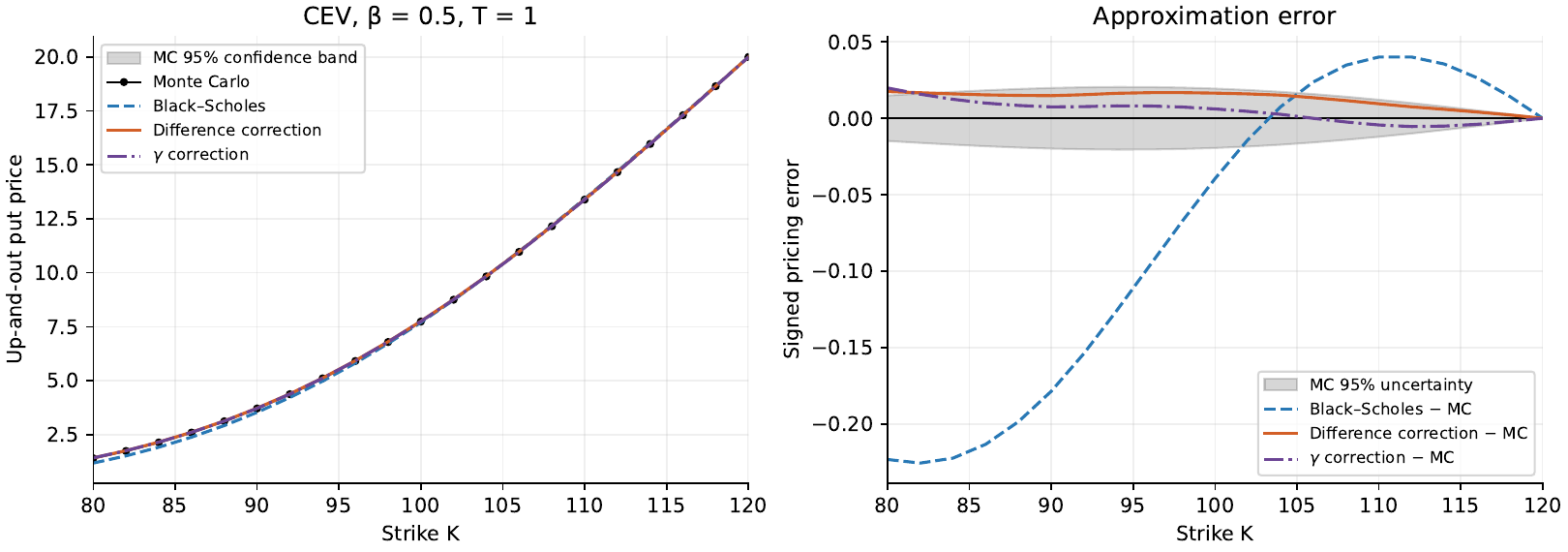}
 \caption{CEV up-and-out put prices with $\beta=0.5$, $\sigma=2$,
 $S_0=100$, $B=120$, and $\theta=1$.  Shaded regions are pointwise
 95\% Monte Carlo confidence bands; the right panel shows each
 approximation minus Monte Carlo, including the explicit $\gamma$
 correction \eqref{regular-gamma-approx}.}
 \label{fig:cev}
\end{figure}

\begin{samepage}
Under the lognormal SABR model
\begin{equation*}
    \mathrm{d}S_t = \sigma_t S_t\mathrm{d}W_t, \ \ 
    \mathrm{d}\sigma_t = \nu \sigma_t \mathrm{d}B_t,\ \ \mathrm{d}\langle W,B\rangle_t = \rho \mathrm{d}t,
\end{equation*}
the formula implies
\begin{equation*}
     p^{\text{UO}} \approx 
      p_\text{BS}^\text{UO} +  (
   p_\text{BS}^\text{UO} -P_0)\frac{\nu \rho}{\sigma_0}.
\end{equation*}
\end{samepage}
We use $\nu=0.2$, $\rho=-0.9$, $\sigma_0=0.2$, $\theta=1$,
$S_0=100$, $B=120$, and the same 21 strikes and Monte Carlo protocol
as for CEV.  Both $p_{\mathrm{BS}}^{\mathrm{UO}}$ and $P_0$ use the
initial-variance total volatility $w(0)=\sigma_0\sqrt{\theta}=0.2$.
Indeed, $V_t=\sigma_t^2$ satisfies
$\mathrm{d}V_t=\nu^2V_t\,\mathrm{d}t+2\nu V_t\,\mathrm{d}B_t$, so
$\mathsf{E}[V_t]=\sigma_0^2 e^{\nu^2t}=v_0+O(t)$, verifying the
short-time justification above.  Replacing the full difference
$p_{\mathrm{BS}}^{\mathrm{UO}}-P_0$ by $S_0v_0\theta\gamma(b,k)$
gives the superposed, asymptotically equivalent approximation
\eqref{regular-gamma-approx}; the two can differ numerically at this
one-year maturity.

Figure~\ref{fig:sabr} shows that the SABR correction also improves
the aggregate error.  The mean absolute error decreases from
$0.14937$ for Black--Scholes to $0.07444$ for the
difference correction, while the maximum absolute error decreases
from $0.34492$ to $0.09995$.
The $\gamma$ correction has mean and maximum absolute errors
$0.06009$ and $0.09242$, respectively.
The improvement is not uniform in strike: Black--Scholes is closer to
Monte Carlo at $K=98,100,102,104$.
The largest Monte Carlo standard error is $0.01079$.
Increasing the monitoring frequency from $12{,}500$ to $25{,}000$
dates changes the controlled estimates by at most
$0.00408$, with a maximum paired standard error
of $0.00039$.
The monitoring effect is detectable despite the fine grid; these
figures do not provide a rigorous bound on the remaining bias.

\begin{figure}[htbp]
    \centering
    \includegraphics[width=0.98\linewidth]{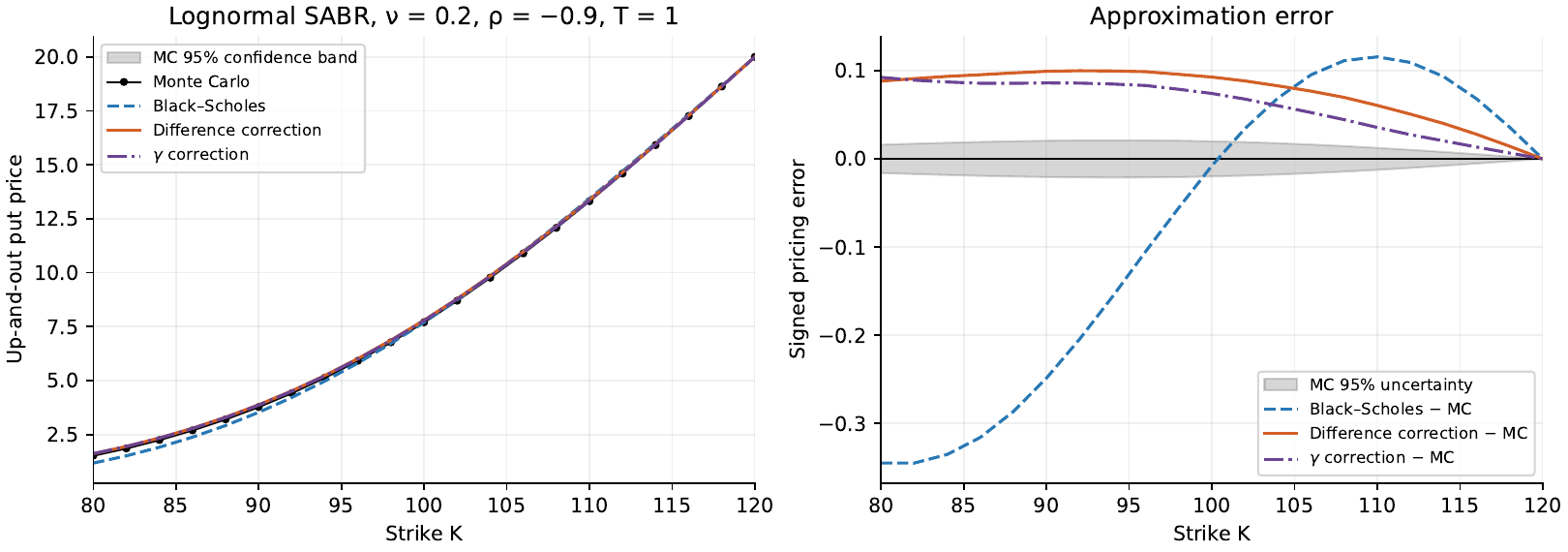}
    \caption{Lognormal SABR up-and-out put prices with $\nu=0.2$,
    $\rho=-0.9$, $\sigma_0=0.2$, and $\theta=1$.  The spot, barrier,
    strikes, and plot conventions are as in Figure~\ref{fig:cev}.
    All Black--Scholes and Bachelier terms use the initial variance
    $v_0=\sigma_0^2=0.04$, as does the normalization of $\gamma$.}
    \label{fig:sabr}
\end{figure}

\begin{samepage}
\begin{remark} \upshape
    With the $Y$ process frozen in \eqref{LSV},
    the UOP price under the corresponding local volatility model is approximated 
    as
    \begin{equation*}
        p^\text{UO}_\text{loc} \approx P_0 + \left(1 + \frac{S_0}{\sigma} \frac{\partial \sigma}{\partial s}(S_0,Y_0,0)\right)(p^\text{UO}_\text{BS}-P_0).
    \end{equation*}
    We can use this price as the leading term of the expansion:
    \begin{equation*}
        p^\text{UO} \approx p^\text{UO}_\text{loc} + (P_0 - p^\text{UO}_\text{loc})\delta, \ \ \delta = - \frac{\sum_{i,j}\rho^jc^i_j\frac{\partial}{\partial y^i}\log \sigma }{\sigma + S_0 \frac{\partial \sigma}{\partial s} }(S_0,Y_0,0)
    \end{equation*}
    under \eqref{LSV}.
    When $0\leq \delta \leq 1$, this is an interpolation of the local volatility price and the Bachelier price, hence is free from butterfly arbitrage opportunities.
\end{remark}

\end{samepage}

\subsection{Rough volatility models}
We now specialize Theorem~\ref{thm1} to the rough Bergomi model of
Bayer et al.~\cite{BFG}.  Let $W$ and $W^\perp$ be independent
Brownian motions, set
\begin{equation*}
 B^S_t=\rho W_t+\sqrt{1-\rho^2}\,W^\perp_t,
 \qquad \rho\in[-1,1],
\end{equation*}
and consider
\begin{equation}\label{rbergomi}
 \frac{\mathrm{d}S_t}{S_t}=\sqrt{V_t}\,\mathrm{d}B^S_t,
 \qquad
 V_t=\xi_0(t)\exp\left\{
 \nu\sqrt{2H}\int_0^t(t-s)^{H-1/2}\mathrm{d}W_s
 -\frac{\nu^2}{2}t^{2H}\right\},
\end{equation}
where $H\in(0,1/2)$, $\nu>0$, and the forward variance curve $\xi_0$
is positive and continuous at zero.  Since $\mathsf{E}[V_t]=\xi_0(t)$,
the function $v$ in Theorem~\ref{thm1} is precisely $\xi_0$.  Brownian
scaling gives
\begin{equation*}
 \frac{1}{\theta^H}\left(\frac{V_\theta}{\xi_0(\theta)}-1\right)
 \longrightarrow
 \eta:=\nu\sqrt{2H}\int_0^1(1-s)^{H-1/2}\mathrm{d}W_s.
\end{equation*}
Moreover, $\xi=B^S_1$ and $\zeta=\sup_{s\leq1}B^S_s$ in
\eqref{conv}.  Decomposing $W=\rho B^S+\sqrt{1-\rho^2}\,\widetilde W$,
where $\widetilde W$ is independent of $B^S$, shows that only the first
component contributes to the conditional expectation in $\alpha$.

The conditional Brownian maximum identities in the appendix then reduce
the coefficient to a one-dimensional integral.  With $\bar\psi$ defined in
the proof of Theorem~\ref{thm1},
\begin{equation}\label{alpha-rbergomi}
\alpha =  \alpha_{\mathrm{rB}}
 :=\frac{\rho\nu\sqrt{2H}}{2}
 \int_0^1(1-s)^{H-1/2}\bar\psi(b,k,s)\,\mathrm{d}s .
\end{equation}
For completeness, this simplification follows by first obtaining
\begin{equation*}
 \frac{\rho\nu\sqrt{2H}}{2}
 \int_0^1 \partial_3\bar\psi(b,k,s)s^{H+1/2}
 \beta_{H+1/2,-H-1/2}(1-s)\,\mathrm{d}s,
\end{equation*}
where $\beta_{a,b}(z)=\int_0^z t^{a-1}(1-t)^{b-1}\mathrm{d}t$.
Writing $a=H+1/2$, the identity
\begin{equation*}
 \beta_{a,-a}(1-s)=\frac{1}{a}\left(\frac{1-s}{s}\right)^a
\end{equation*}
and integration by parts yield \eqref{alpha-rbergomi}; the boundary terms
vanish.  Notice that at $H=1/2$, formula \eqref{gamma2} gives
$\alpha_{\mathrm{rB}}=(\rho\nu/2)\gamma(b,k)$, consistently with
Corollary~\ref{cor1}.

Combining \eqref{main} and \eqref{alpha-rbergomi}, we obtain the practical
approximation
\begin{equation}\label{rbergomi-approx}
 p^{\mathrm{UO}}\approx
 p_{\mathrm{BS}}^{\mathrm{UO}}
 \left(S_0,\sqrt{\int_0^\theta\xi_0(r)\mathrm{d}r}\right)
 +\alpha_{\mathrm{rB}}\sqrt{\xi_0(0)}S_0\theta^{H+1/2}.
\end{equation}

We examine \eqref{rbergomi-approx} with a flat forward variance curve and
the parameters
\begin{equation*}
 H=0.1,\quad \nu=1,\quad \rho=-0.9,\quad
 \xi_0(t)=0.04,\quad S_0=100,\quad B=120,\quad \theta=0.25,
\end{equation*}
for 21 equally spaced strikes $K\in[80,120]$.  The Monte Carlo benchmark
uses the common production and pilot sample sizes specified above,
and the 3R hybrid scheme~\cite{FH} on $25{,}000$ time steps over
$\theta=0.25$, so $\Delta t=10^{-5}$.  The first two singular-kernel
cells are sampled exactly, cells 3 through 10 are obtained by the
3R orthogonal projection, and the remaining convolution uses
$L^2$-optimal hybrid weights and is evaluated by FFT.
Writing $X_i=\log S_i$, the stock is discretized as
\begin{equation*}
 X_{i+1}=X_i-\tfrac12 V_i\Delta t+
 \sqrt{V_i}\bigl(\rho\Delta W_i+\sqrt{1-\rho^2}\Delta W_i^\perp\bigr).
\end{equation*}
The barrier is monitored only at grid endpoints, and the fitted
$K=B$ control is applied exactly as in the regular-model experiments.
The comparison with $12{,}500$ monitoring dates uses the same
$25{,}000$-step stock and variance paths and assesses only the
monitoring component of the numerical error.

Figure~\ref{fig:rbergomi-uop} compares Monte Carlo, the Black--Scholes
barrier price, and \eqref{rbergomi-approx}.  Across the 21 strikes,
the mean absolute pricing error decreases from $0.27126$ for
Black--Scholes to $0.10319$ for the rough Bergomi
approximation; the corresponding maximum absolute error decreases
from $0.52969$ to $0.18689$.
The largest Monte Carlo standard error is $0.00649$.
Increasing the number of monitoring dates from $12{,}500$ to
$25{,}000$ changes the controlled estimates by at most
$0.00005$, with a maximum paired standard error
of $0.00003$.
The correction captures a substantial part of the leverage effect,
while the remaining discrepancy includes approximation, sampling,
and discretization errors.

\begin{figure}[htbp]
 \centering
 \includegraphics[width=0.98\linewidth]{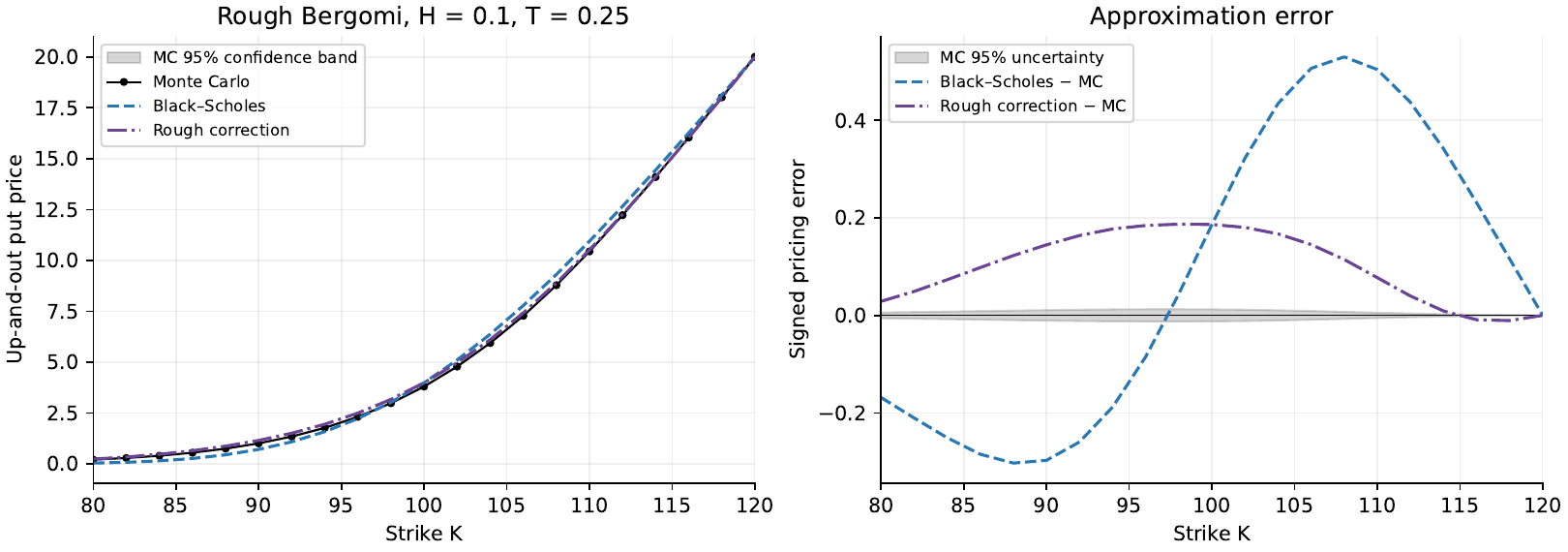}
 \caption{Rough Bergomi up-and-out put prices for $S_0=100$, $B=120$,
 $K\in[80,120]$, and $\theta=0.25$.  The shaded regions are pointwise
 95\% Monte Carlo confidence bands.  The right panel plots each
 approximation minus the 3R Monte Carlo estimate.}
 \label{fig:rbergomi-uop}
\end{figure}

\appendix
\section{Computing conditional expectations}
Let $W = \{W_t\}$ be a one-dimensional standard Brownian motion and
\begin{equation*}
    M_t = \sup_{s \in [0,t]}W_s.
\end{equation*}
Let
\begin{equation*}
    \phi(x,t) = \frac{1}{\sqrt{2\pi t}} \exp\left(-\frac{x^2}{2t} \right).
\end{equation*}
Here we give a formula to compute
\begin{equation*}
    \mathsf{E}[W_t| M_1 \leq m, W_1 = x].
\end{equation*}
By the reflection principle,
\begin{equation}\label{RP}
    \mathsf{P}[M_t \leq m, W_t \in \mathrm{d}x] = \psi(m,x,t)\mathrm{d}x,\ \ 
    \psi(m,x,t) = \phi(x,t)-\phi(2m-x,t)
\end{equation}
and so,
\begin{equation*}
\begin{split}
       & \mathsf{P}[M_1 \leq m, W_t \in \mathrm{d}y, W_1 \in \mathrm{d}x] \\
       &= \mathsf{P}[M_t \leq m, W_t \in \mathrm{d}y]
          \mathsf{P}[M_{1-t} \leq m-y, W_{1-t} \in \mathrm{d}(x-y)] \\
          &= \psi(m,y,t)\psi(m-y,x-y,1-t)\mathrm{d}y\mathrm{d}x
\end{split}
\end{equation*}
We can check that
\begin{equation*}
    \int_{-\infty}^m\psi(m,y,t)\psi(m-y,x-y,1-t)\mathrm{d}y = \psi(m,x,1).
\end{equation*}
A direct but lengthy computation yields
\begin{equation}\label{TC}
    \int_{-\infty}^m y\psi(m,y,t)\psi(m-y,x-y,1-t)\mathrm{d}y = 
    \bar{\phi}(m,x,t)-\bar{\phi}(m,2m-x,t),
\end{equation}
where
\begin{equation*}
    \bar{\phi}(m,x,t) = \phi(x,1)\left(2m-tx-2(m-tx)\Phi\left(\frac{m-tx}{\sqrt{t(1-t)}}\right)\right).
\end{equation*}
\begin{proposition}\label{prop:conditional-maximum}
For $0<t<1$ and $m>\max\{x,0\}$,
   \begin{equation*}
   \begin{split}
       \mathsf{E}[W_t| M_1 \leq m, W_1 = x] = \frac{\bar{\psi}(m,x,t)}{\psi(m,x,1)}.
   \end{split}
\end{equation*}
\end{proposition}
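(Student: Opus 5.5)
The plan is to obtain the conditional expectation as a ratio of two joint densities, using the Markov property and the reflection principle \eqref{RP}, and then to evaluate the numerator integral in closed form. Write $\mathsf{E}[W_t\mid M_1\le m, W_1=x]$ as
\begin{equation*}
\frac{\int_{-\infty}^m y\,\mathsf{P}[M_1\le m, W_t\in \mathrm{d}y, W_1\in\mathrm{d}x]/\mathrm{d}x}{\mathsf{P}[M_1\le m, W_1\in\mathrm{d}x]/\mathrm{d}x}.
\end{equation*}
The condition $m>\max\{x,0\}$ guarantees that the denominator $\psi(m,x,1)=\phi(x,1)-\phi(2m-x,1)$ is strictly positive.

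\textbf{Step 1 (joint density).} Split the event $\{M_1\le m\}$ at time $t$ into $\{M_t\le m\}\cap\{\sup_{s\in[t,1]}(W_s-W_t)\le m-W_t\}$. The increment process after $t$ is independent of $\mathscr{F}_t^W$, so applying \eqref{RP} on each piece gives the joint density $\psi(m,y,t)\psi(m-y,x-y,1-t)$. This is exactly the display preceding \eqref{TC}.

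\textbf{Step 2 (denominator).} The Chapman--Kolmogorov identity $\int_{-\infty}^m\psi(m,y,t)\psi(m-y,x-y,1-t)\,\mathrm{d}y=\psi(m,x,1)$ can be checked directly. The cleanest route is to note that $\psi(m-y,x-y,1-t)=\phi(x-y,1-t)-\phi(2m-x-y,1-t)$ is the killed heat kernel from $y$ to $x$ with barrier $m$. The semigroup property of the killed kernel then follows from the method of images: expand the product into four Gaussian convolutions, extend the integrals to $\mathbb{R}$ by antisymmetry about $m$, and apply Gaussian convolution.

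\textbf{Step 3 (numerator, \eqref{TC}).} Again expand the product into four terms of the form $y\,\phi(a-y,t)\phi(c-y,1-t)$, each multiplied by $\pm 1$, and integrate over $y<m$. For each term, complete the square in $y$. The product $\phi(y,t)\phi(x-y,1-t)$ equals $\phi(x,1)$ times a normal density in $y$ with mean $tx$ and variance $t(1-t)$; the image terms have the same structure with $x$ replaced by $2m-x$, up to reflections $y\mapsto 2m-y$. The truncated first moment of a normal $N(\mu,s^2)$ on $(-\infty,m)$ is $\mu\Phi((m-\mu)/s)-s\,\phi((m-\mu)/s)$. Pairing each term with its reflected partner turns the integrals over $(-\infty,m)$ into integrals over $(m,\infty)$, which produces the combination $2m-tx-2(m-tx)\Phi\bigl((m-tx)/\sqrt{t(1-t)}\bigr)$. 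Collecting the terms proportional to $\phi(x,1)$ gives $\bar\phi(m,x,t)$, and the terms proportional to $\phi(2m-x,1)$ give $-\bar\phi(m,2m-x,t)$, which is $\bar\psi(m,x,t)$. In this step I also have to check that the Gaussian $\phi$-pieces, i.e.\ the terms involving $\sqrt{t(1-t)}\,\phi(\cdot)$, cancel in pairs between each term and its mirror image. I expect this bookkeeping to be the main obstacle.

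\textbf{Step 4.} Divide the result of Step 3 by that of Step 2. If the bookkeeping in Step 3 becomes unwieldy, I would first verify the formula in the limit $m\to\infty$: there $\bar\phi(m,2m-x,t)$ vanishes, and the result should reduce to the Brownian bridge mean $tx$. I would also verify the boundary limits $t\to0$ and $t\to1$, where the answer should be $0$ and $x$ respectively, and use these checks to fix signs.
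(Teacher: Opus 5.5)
Your proposal is correct and is essentially the paper's proof. The paper likewise obtains the conditional density $\psi(m,y,t)\psi(m-y,x-y,1-t)/\psi(m,x,1)$ on $\{y<m\}$ from the Markov property and \eqref{RP}, uses Chapman--Kolmogorov for the denominator and \eqref{TC} for the numerator. The paper only asserts \eqref{TC} as a ``direct but lengthy computation''; your completing-the-square argument supplies it.

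One bookkeeping correction for Step~3 concerns where the Gaussian pieces cancel. Write $d_z=(m-tz)/\sqrt{t(1-t)}$.
\begin{itemize}
\item Within each $y\mapsto 2m-y$ pair, the $-\sqrt{t(1-t)}\,\phi(\cdot)$ pieces add rather than cancel. They give $-2\sqrt{t(1-t)}\,\phi(x,1)\phi(d_x)$.
\item They cancel instead between the $x$-group and the $(2m-x)$-group. Both $\phi(x,1)\phi(d_x)$ and $\phi(2m-x,1)\phi(d_{2m-x})$ equal $(2\pi)^{-1}\exp\bigl(-\tfrac{m^2}{2t}-\tfrac{(m-x)^2}{2(1-t)}\bigr)$.
\end{itemize}
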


{\it Proof:}
By the Markov property at time $t$ and the reflection-principle density
\eqref{RP}, the conditional density of $W_t$, given
$\{M_1\leq m,W_1=x\}$, is
\begin{equation*}
 \frac{\psi(m,y,t)\psi(m-y,x-y,1-t)}{\psi(m,x,1)}
 1_{\{y<m\}}\,\mathrm{d}y.
\end{equation*}
Indeed, the denominator is the integral of the numerator by the
Chapman--Kolmogorov identity displayed above.  Hence, using \eqref{TC},
\begin{equation*}
\begin{split}
 \mathsf{E}[W_t\mid M_1\leq m,W_1=x]
 &=\frac{1}{\psi(m,x,1)}
   \int_{-\infty}^m y\psi(m,y,t)\psi(m-y,x-y,1-t)\,\mathrm{d}y\\
 &=\frac{\bar\phi(m,x,t)-\bar\phi(m,2m-x,t)}{\psi(m,x,1)}
 =\frac{\bar\psi(m,x,t)}{\psi(m,x,1)}.
\end{split}
\end{equation*}
\hfill{$\square$}

\section*{Declaration of AI use}
The author used OpenAI's ChatGPT to assist with coding, language editing and the checking of calculations and mathematical arguments. The Lean formalization of all the proofs was carried out by ChatGPT. All AI-generated suggestions were critically reviewed and, where appropriate, revised by the author, who takes full responsibility for the content of this paper.

\end{document}